\DeclareMathOperator{\tr}{Tr}
\newtheorem{Def}{Definition}
\newtheorem{prop}{Proposition}
\newtheorem{proof}{Proof}
\def\includegraphics{}
\begin{document}

\begin{frontmatter}

\begin{fmbox}
\dochead{Research}


\title{Interpolation Approach to Hamiltonian-varying Quantum Systems and the Adiabatic Theorem}


\author[
   addressref={aff1},                   
   corref={aff1},                       
   email={yu.pan.83.yp@gmail.com}   
]{\inits{YP}\fnm{Yu} \snm{Pan}}
\author[
   addressref={aff2},
   email={Zibo.Miao@unimelb.edu.au}
]{\inits{ZM}\fnm{Zibo} \snm{Miao}}

\author[
   addressref={aff3},
   email={nina.amini@lss.supelec.fr}
]{\inits{NA}\fnm{Nina H.} \snm{Amini}}

\author[
   addressref={aff4},
   email={v.ugrinovskii@gmail.com}
]{\inits{VU}\fnm{Valery} \snm{Ugrinovskii}}

\author[
   addressref={aff1},
   email={matthew.james@anu.edu.au}
]{\inits{MJ}\fnm{Matthew R.} \snm{James}}


\address[id=aff1]{
  \orgname{Research School of Engineering, Australian National University}, 
  \postcode{0200},                                
  \city{Canberra},                              
  \cny{Australia}                                    
}

\address[id=aff2]{%
  \orgname{Department of Electrical \& Electronic Engineering, The University of Melbourne},
  \postcode{3010}
  \city{Melbourne VIC},
  \cny{Australia}
}

\address[id=aff3]{%
  \orgname{CNRS, Laboratoire des signaux et syst\`{e}mes (L2S) Sup\'{e}lec, 3 rue Joliot-Curie},
  \postcode{91192}
  \city{Gif-Sur-Yvette},
  \cny{France}
}
\address[id=aff4]{%
  \orgname{School of Engineering and Information Technology, University of New South Wales at ADFA},
  \postcode{2600}
  \city{Canberra},
  \cny{Australia}
}


\begin{artnotes}
\end{artnotes}

\end{fmbox}


\begin{abstractbox}

\begin{abstract} 
Quantum control could be implemented by varying the system Hamiltonian. According to adiabatic theorem, a slowly changing Hamiltonian can approximately keep the system at the ground state during the evolution if the initial state is a ground state. In this paper we consider this process as an interpolation between the initial and final Hamiltonians. We use the mean value of a single operator to measure the distance between the final state and the ideal ground state. This measure resembles the excitation energy or excess work performed in thermodynamics, which can be taken as the error of adiabatic approximation. We prove that under certain conditions, this error can be estimated for an arbitrarily given interpolating function. This error estimation could be used as guideline to induce adiabatic evolution. According to our calculation, the adiabatic approximation error is not linearly proportional to the average speed of the variation of the system Hamiltonian and the inverse
of the energy gaps in many cases. In particular, we apply this analysis to an example in which the applicability of the adiabatic theorem is questionable.
\end{abstract}


\begin{keyword}
\kwd{Quantum control}
\kwd{Adiabatic theorem}
\kwd{Interpolation of Hamiltonian}
\end{keyword}


\end{abstractbox}
%

\end{frontmatter}



\section{Introduction}

Adiabatic process is aimed at stabilizing a parameter-varying quantum
system at its eigenstate. This process has many applications in the
engineering of quantum systems
\cite{Bergmann98,Wu13,Ribeiro13,Zhang13,Wang12}, and in particular plays
the fundamental role in adiabatic quantum computation (AQC)
\cite{Farhi00,Farhi01,Sarandy05}. The adiabatic theorem
\cite{Born28,Kato50} states that a system will undergo adiabatic evolution
given that the system parameter varies slowly.

Quantifying the applicability of adiabatic approximations is an interesting topic of current research efforts. On the one hand, this kind of research has been spurred by so-called shortcuts to adiabaticity \cite{Torrontegui13}, and on the other hand recent insights from thermodynamics haven put adiabatic processes back into focus \cite{Acconcia15,Sivak12}. In particular, the validity of the adiabatic theorem has been under intensive studies both theoretically and experimentally since it was proposed, and much of these efforts were devoted to the rigorous
description of the sufficient quantitative conditions of adiabatic theorem, and the estimation of the error accumulated over a
long time \cite{Kato50,Nenciu93,Avron12,Cao12}. Once the exact knowledge on the
adiabatic process is available, it is straightforward to apply the results to the optimal
design of adiabatic control on specific systems \cite{Rezakhani09,Wilson12}. The most interesting
progress is that the validity of the adiabatic theorem itself has been challenged in
the recent decade \cite{Marzlin04,Tong05,Tong07,Du08,Zhao08,Comparat09,Ortigoso12,Rigolin12,Zhang14}, both by strict
analysis and counter-examples. According to these findings, the errors induced by the adiabatic
approximation could accumulate over time despite certain quantitative condition is satisfied \cite{Marzlin04,Tong05,Tong07,Comparat09,Ortigoso12}, e.g., when
there exists an additional perturbation or driving that is resonant with the system. Particularly as indicated in \cite{Comparat09}, it is not new that resonant driving can cause population transfer between eigenstates. Also, a proof can be found in \cite{Ortigoso12} stating that only a resonant perturbation whose amplitude gradually decays to
zero can result in a violation of a well-known sufficient condition.

In this paper we consider the following process: the process starts at $t=0$. The system Hamiltonian at $t=0$
is $H_1$, and the system Hamiltonian at $t=T$ is $H_2=H_1+\lambda\Delta H,\lambda>0$. $\lambda$ is a dimensionless quantity. $\Delta H$ is a fixed operator and so the direction of the variation is fixed. We assume $H_1$, $H_2$, and $\Delta H$ are bounded operators throughout this
paper. $T$ is the evolution time. The transition of the system from $H_1$ to $H_2$ can be described using an interpolating function $f(t)$ so that
\begin{eqnarray}\label{ad1}
H(t)=H_1+f(t)(H_2-H_1)=H_1+\lambda f(t)\Delta H,
\end{eqnarray}
with $f(0)=0$ and $f(T)=1$. We work under the condition that a valid
perturbative analysis of the system evolution is available. This often
means $\lambda$ should be smaller than a threshold value. It is worth mentioning that the classical adiabatic theorem was proved also using a perturbative analysis, which cannot be applied directly to a large variation of Hamiltonian. Therefore, our
analysis in this paper is not concerned with the adiabatic evolution for a
large variation of Hamiltonian. However, our analysis provides a rigorous
estimation of the error accumulated during this
small-variation evolution for an
arbitrarily given interpolation.

Our work is different from the previous works in two ways. First, instead
of studying the evolution of the eigenstates and their corresponding
probability amplitudes, the mean value of a Hermitian operator is defined
as a measure of the error. For example, in the context of adiabatic quantum
computation where one wants to prepare the ground state of a target
Hamiltonian $\hat{H}_2\geq0$ whose ground-state energy is $0$, $\epsilon=\langle \hat{H}_2\rangle_{\rho_t}$ serves
as a good measure of the distance between the real-time state $\rho_t$ and the
ground state. This measure resembles the excitation energy or excess work performed during the process, as studied in thermodynamics \cite{Acconcia15}. In this paper we only consider the error accumulated over the
entire process, which means we are only interested in $\epsilon=\langle
\hat{H}_2\rangle_{\rho_T}$. The second difference is that the error, or the excitation energy or excess work performed during the process, can be estimated
with a sufficient precision for arbitrarily given interpolating
functions. As a result, the parameters which are related to the suppression
of the error can be easily identified. For example, we have
$\epsilon=O(\frac{\lambda^2}{T^2\lambda_2^3})$ as $\lambda\rightarrow0$ in
the case of linear
interpolation. Here $\lambda_2$ is the energy gap between the ground and first-excited states of the initial Hamiltonian. However for the interpolation in the counterexample
\cite{Marzlin04,Ortigoso12}, the scaling of $\epsilon$ is not so simple.

This paper is organized as follows. In Section \ref{secdef}, we introduce the model of this paper. In Section \ref{seclinear}, we give the estimation of the error for linear interpolation. In section \ref{secnonlinear}, we present the general algorithm to estimate the error for an arbitrarily given interpolating function. We discuss three examples in Section \ref{secexam}. Conclusion is given in section \ref{seccon}.

\section{Definitions and Preliminaries}\label{secdef}
The system is defined on an $N$-dimensional Hilbert space. We set Dirac constant $\hbar=1$. $||\cdot||$ denotes the matrix norm. Two real functions $f_1(x)$ and $f_2(x)$ can be denoted as $
f_1(x)=O(f_2(x)),\ x\rightarrow\infty,$ if and only if there exists a positive real number $M$ and a real number $x_0$ such that $|f_1(x)|\leq M|f_2(x)|,\ x\geq x_0$, where $|\cdot|$ denotes the absolute value.

Let $\{\omega_i:i=1,2,...N\}$ be the monotonically increasing sequence of
eigenvalues of $H_1$, so that $\omega_i\geq\omega_j$ when $i>j$, and
$\{|i\rangle\}$ be the corresponding eigenstates. We denote the energy gap
between the $i$th eigenstate and the ground state
as $\lambda_i=\omega_i-\omega_1$. Similarly, we define
the increasing sequence of eigenvalues of $H_2$, $\{\omega_i^{'}:i=1,2,...N\}$ and $\{\lambda_i^{'}\},\{|i^{'}\rangle\}$ correspondingly.

For convenience, we also introduce two offset Hamiltonians, $\hat{H}_1$ and
$\hat{H}_2$. The Hamiltonian $\hat{H}_1$ is defined as
$\hat{H}_1=H_1-\omega_1$, i.e., by offsetting the Hamiltonian of the system at
$t=0$ by a constant operator $\omega_1$ so that
  $\hat{H}_1\geq0$.
By $\hat{H}_1\geq0$ we mean $\hat{H}_1$ is positive semidefinite and its
the smallest eigenvalue of $\hat{H}_1$ is zero. Similarly, we define
$\hat{H}_2=H_2-\omega^{'}_1\geq0$ by offsetting
the system Hamiltonian by a constant operator $\omega^{'}_1$. Let $\rho_t$ denote the system state at time $t$ and let $\rho_g$
be the initial state of the system at $t=0$. We
always assume that $\rho_g$ is the ground state of $\hat{H}_1$, and so we
have $\langle \hat{H}_1\rangle_{\rho_g}=0$.

The measure of adiabaticity is proposed as follows
\begin{Def}
The distance between the final state and the ground state of $H_2$ is measured by
\begin{equation}\label{nde}
\epsilon=\langle \hat{H}_2\rangle_{\rho_T}.
\end{equation}
\end{Def}
Obviously, if the evolution is adiabatic, i.e., $\rho_T$ is the ground
state of $H_2$, then we have $\epsilon=0$. In particular, $\epsilon$ is closely related to the fidelity of the final state and ground state in the Schr{\"o}dinger picture (See Appendix \ref{appendix3}). A small error $\epsilon$ implies a large fidelity.

In this paper we also call $\epsilon$ the adiabatic approximation error, as $\epsilon$ reflects how well we can approximate the evolution as a perfect adiabatic process.

In this paper we only consider $\lambda$ such that  $\rho_t,\ t\in[0,T]$ can be expanded using Magnus series in the interaction picture.
For more details about the expansion in the interaction picture, please
refer to Appendix \ref{appendix1}. If the series expansion is valid in the
interaction picture, we can transform back to the Schr\"{o}dinger picture
and write the evolution of the state as (see Appendix \ref{appendix1})
\begin{eqnarray}\label{new1}
\rho_t=e^{-\mbox iH_1t}(\rho_g+R(t)+\mbox{i}[\rho_g,\lambda\int_0^tdt^{'}e^{\mbox iH_1t^{'}}f(t^{'})\Delta He^{-iH_1t^{'}}])e^{\mbox{i}H_1t},
\end{eqnarray}
where we have $||R(t)||=O(\lambda^2)$. A sufficient condition for the Magnus series to converge is given by (see Appendix \ref{appendix1})
\begin{equation}
\lambda<\frac{\pi}{\|\Delta H\|\int_0^Tf(t)dt}.
\end{equation}

Our aim is to estimate an asymptotic behaviour of $\epsilon$ provided
$\lambda\to 0$. Furthermore, we will use the obtained estimate
to analyze several cases of the adiabatic theorem including those where
some difficulties with adiabatic approximation have been encountered.

\section{Adiabatic approximation under linear interpolation of the Hamiltonian}\label{seclinear}
The Heisenberg evolution of the expectation of an observable is written as
\begin{equation}\label{heidy}
\frac{d}{dt}\langle X(t)\rangle_{\rho_g}=\langle-\mbox{i}[X(t),H]\rangle_{\rho_g},
\end{equation}
where $H$ is the system Hamiltonian. Recall that $\rho_g=|1\rangle\langle 1|$. Since $H_1|1\rangle=\omega_1|1\rangle$, $\langle X(t)\rangle_{\rho_g}$ is a constant of motion under the action of $H_1$:
\begin{equation}\label{ini}
\frac{d}{dt}\langle X(t)\rangle_{\rho_g}=\langle-\mbox{i}[X(t),H_1]\rangle_{\rho_g}=0=\langle-\mbox{i}[X(t),\hat{H}_1]\rangle_{\rho_g}
\end{equation}
for any Hermitian operator $X(t)$.

We will need to study the dynamics of $\langle \hat{H}_2\rangle_{\rho_t}=\langle \hat{H}_2(t)\rangle_{\rho_g}$ in order to solve for $\epsilon$. The time evolution of $\langle \hat{H}_2\rangle_{\rho_t}$ is determined by its generator $\frac{d}{dt}\langle \hat{H}_2\rangle_{\rho_t}=\langle-\mbox{i}[\hat{H}_2,H(t)]\rangle_{\rho_{t}}$. For linear interpolating function $f(t)=\frac{t}{T}$, integration of $\frac{d}{dt}\langle \hat{H}_2\rangle_{\rho_t}$ over $[0,T]$ results in the following expression (See details in Appendix \ref{appendix2}):
\begin{eqnarray}
&&\langle \hat{H}_2\rangle_{\rho_T}-\langle \hat{H}_2\rangle_{\rho_g}=\int_0^T(\langle-\mbox{i}[\hat{H}_2,H(t)]\rangle_{\rho_{t}})dt\nonumber\\
&=&\int_0^Tdt[-2(1-f(t))\sum_{i\neq 1}(\omega_i-\omega_1)\langle 1|\hat{H}_2|i\rangle\langle i|\hat{H}_2|1\rangle\int_0^tdt^{'}f(t^{'})\cos((\omega_i-\omega_1)(t^{'}-t))]\nonumber\\
&+&\int_0^Tdt\tr\{-\mbox{i}e^{\mbox{i}H_1t}[\hat{H}_2,(1-f(t))\hat{H}_1]e^{-\mbox{i}H_1t}R(t)\}.\label{ft}\\
&=&\sum_{i\neq 1}(-\frac{1}{\lambda_i}+\frac{4\sin^2(\lambda_iT/2)}{T^2\lambda_i^3})\langle 1|\hat{H}_2|i\rangle\langle i|\hat{H}_2|1\rangle\nonumber\\
&+&\int_0^Tdt\tr\{-\mbox{i}e^{\mbox{i}H_1t}[\hat{H}_2,(1-\frac{t}{T})\hat{H}_1]e^{-\mbox{i}H_1t}R(t)\}\label{ad9}
\end{eqnarray}
As we noted before, $\langle \hat{H}_2\rangle_{\rho_T}$ is exactly zero
  if $\rho_T$ is the ground state of $\hat{H}_2$. If $\rho_T$ is not the
  ground state of $\hat{H}_2$, we can determine the bound on $\epsilon=\langle
\hat{H}_2\rangle_{\rho_T}$ from the following equality
\begin{eqnarray}\label{int1}
\langle \hat{H}_2\rangle_{\rho_T}-\langle \hat{H}_2\rangle_{\rho_g}&=&\int_0^{T}\langle-\mbox{i}[\hat{H}_2,H(t)]\rangle_{\rho_{t}}dt\nonumber\\
&=&\int_0^{T}\langle-\mbox{i}[H_2,H(t)]\rangle_{\rho_{t}}dt=\langle H_2\rangle_{\rho_T}-\langle H_2\rangle_{\rho_g}.
\end{eqnarray}
Since
\begin{equation}
\hat{H}_2=H_2-\omega_1^{'},
\end{equation}
The error $\epsilon$ can be expressed as
\begin{equation}\label{int3}
\epsilon=\langle H_2\rangle_{\rho_T}-\langle H_2\rangle_{\rho_g}-[\omega_1^{'}-\langle H_2\rangle_{\rho_g}].
\end{equation}
With the aid of (\ref{ad9}), we can investigate the rate of convergence of $\epsilon$ to zero as $\lambda$ tends to zero in the
case where $f(t)$ defines a linear interpolation, as summarized in the
following proposition:
\begin{prop}\label{theorem1}
Assume $\lambda_2>0$ (the ground state of $H_1$ is non-degenerate) and
suppose $f(t)=t/T$, which corresponds to the linear interpolation of the
Hamiltonian. The estimation of $\epsilon$ is given by $\sum_{i\neq 1}\frac{4\lambda^2\sin^2(\lambda_iT/2)|\langle 1|\Delta H|i\rangle|^2}{T^2\lambda_i^3}+O(\lambda^3)$, which is of the order
$O(\frac{\lambda^2}{T^2\lambda_2^3})$ as $\lambda\rightarrow0$.
\end{prop}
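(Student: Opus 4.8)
The plan is to evaluate the exact identity (\ref{ad9}) term by term as $\lambda\to0$. First I would rewrite the leading sum: since $\hat{H}_2 = H_1 + \lambda\Delta H - \omega_1'$ and $|1\rangle$, $|i\rangle$ are orthonormal eigenvectors of $H_1$, for every $i\neq1$ the scalar offset and the $H_1$ part drop out, leaving $\langle 1|\hat{H}_2|i\rangle = \lambda\langle 1|\Delta H|i\rangle$, hence $\langle 1|\hat{H}_2|i\rangle\langle i|\hat{H}_2|1\rangle = \lambda^2|\langle 1|\Delta H|i\rangle|^2$. Thus the first sum in (\ref{ad9}) equals $\lambda^2\sum_{i\neq1}\bigl(-\tfrac{1}{\lambda_i}+\tfrac{4\sin^2(\lambda_iT/2)}{T^2\lambda_i^3}\bigr)|\langle 1|\Delta H|i\rangle|^2$. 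For the remainder trace term I would use that $\hat{H}_1$, $\hat{H}_2$ differ from $H_1$, $H_2$ only by scalars, so $[\hat{H}_2,(1-t/T)\hat{H}_1] = (1-t/T)\lambda[\Delta H,H_1]$ is $O(\lambda)$ uniformly on $[0,T]$; together with $\|R(t)\| = O(\lambda^2)$ and integration over the fixed interval $[0,T]$, this term is $O(\lambda^3)$. Consequently $\langle\hat{H}_2\rangle_{\rho_T}-\langle\hat{H}_2\rangle_{\rho_g} = \lambda^2\sum_{i\neq1}\bigl(-\tfrac{1}{\lambda_i}+\tfrac{4\sin^2(\lambda_iT/2)}{T^2\lambda_i^3}\bigr)|\langle 1|\Delta H|i\rangle|^2 + O(\lambda^3)$.

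The step requiring care is the initial-state term $\langle\hat{H}_2\rangle_{\rho_g}$. With $\rho_g=|1\rangle\langle1|$ one has $\langle\hat{H}_2\rangle_{\rho_g} = \omega_1 + \lambda\langle 1|\Delta H|1\rangle - \omega_1'$. Here the hypothesis $\lambda_2>0$ enters: for $\lambda$ below a threshold determined by $\lambda_2$ and $\|\Delta H\|$, non-degenerate Rayleigh--Schr\"odinger perturbation theory applies to the simple ground eigenvalue $\omega_1$ of $H_1$, the perturbed eigenvalue is analytic in $\lambda$ and remains the smallest eigenvalue of $H_2$, so it coincides with $\omega_1'$ and satisfies $\omega_1' = \omega_1 + \lambda\langle 1|\Delta H|1\rangle - \lambda^2\sum_{i\neq1}\frac{|\langle 1|\Delta H|i\rangle|^2}{\lambda_i} + O(\lambda^3)$. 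Substituting gives $\langle\hat{H}_2\rangle_{\rho_g} = \lambda^2\sum_{i\neq1}\frac{|\langle 1|\Delta H|i\rangle|^2}{\lambda_i} + O(\lambda^3)$, which is nonnegative as it must be since $\hat{H}_2\ge0$.

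Finally I would assemble $\epsilon = \langle\hat{H}_2\rangle_{\rho_T} = \langle\hat{H}_2\rangle_{\rho_g} + \bigl(\langle\hat{H}_2\rangle_{\rho_T}-\langle\hat{H}_2\rangle_{\rho_g}\bigr)$ from the two computations (equivalently via (\ref{int3})): the $-\lambda^2|\langle 1|\Delta H|i\rangle|^2/\lambda_i$ terms cancel exactly against $\langle\hat{H}_2\rangle_{\rho_g}$, leaving $\epsilon = \sum_{i\neq1}\frac{4\lambda^2\sin^2(\lambda_iT/2)|\langle 1|\Delta H|i\rangle|^2}{T^2\lambda_i^3} + O(\lambda^3)$, the claimed estimate. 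For the order statement I would bound $\sin^2(\lambda_iT/2)\le1$, use $\lambda_i\ge\lambda_2$ for all $i\neq1$ by monotonicity of the eigenvalues of $H_1$, and $\sum_{i\neq1}|\langle 1|\Delta H|i\rangle|^2 \le \langle 1|\Delta H^2|1\rangle \le \|\Delta H\|^2$, so the leading term is $O\bigl(\lambda^2/(T^2\lambda_2^3)\bigr)$ and dominates the $O(\lambda^3)$ remainder as $\lambda\to0$. I expect the main obstacle to be exactly this cancellation: one must get the sign and the index set of the second-order shift of $\omega_1'$ right so that it annihilates the $-1/\lambda_i$ contributions of (\ref{ad9}), and non-degeneracy is what guarantees both that this perturbation expansion exists and that its remainder is genuinely $O(\lambda^3)$ uniformly over the finite sum.
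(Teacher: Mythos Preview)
Your proposal is correct and follows essentially the same route as the paper: simplify the off-diagonal matrix elements in (\ref{ad9}) via $\langle 1|\hat{H}_2|i\rangle=\lambda\langle 1|\Delta H|i\rangle$, bound the $R(t)$ trace term as $O(\lambda^3)$, invoke second-order Rayleigh--Schr\"odinger perturbation theory for $\omega_1'$, and observe the cancellation of the $-\lambda^2/\lambda_i$ contributions. Your handling of the remainder via $[\hat{H}_2,\hat{H}_1]=\lambda[\Delta H,H_1]$ and your explicit bound establishing the $O(\lambda^2/(T^2\lambda_2^3))$ order are slightly more detailed than the paper's, but the argument is the same.
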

\begin{proof}
Referring to (\ref{int3}) and (\ref{int1}), we need to compute the difference between (\ref{ad9}) and $\omega_1^{'}-\langle 1|H_2|1\rangle$. First we write (\ref{ad9}) as
\begin{eqnarray}
\langle H_2\rangle_{\rho_T}-\langle H_2\rangle_{\rho_g}&=&\sum_{i\neq 1}(-\frac{1}{\lambda_i}+\frac{4\sin^2(\lambda_iT/2)}{T^2\lambda_i^3})\langle 1|\hat{H}_2|i\rangle\langle i|\hat{H}_2| 1\rangle\label{aad1}\\
&+&\int_0^Tdt\tr\{-\mbox{i}e^{\mbox{i}H_1t}[\hat{H}_2,(1-f(t))\hat{H}_1]e^{-\mbox{i}H_1t}R(t)\}\nonumber\\
&=&-\sum_{i\neq 1}\frac{\lambda^2|\langle 1|\Delta H|i\rangle|^2}{\lambda_i}+\sum_{i\neq 1}\frac{4\lambda^2\sin^2(\lambda_iT/2)|\langle 1|\Delta H|i\rangle|^2}{T^2\lambda_i^3}\nonumber\\
&+&\int_0^Tdt\tr\{-\mbox{i}e^{\mbox{i}H_1t}[\hat{H}_2,(1-f(t))\hat{H}_1]e^{-\mbox{i}H_1t}R(t)\},\label{ad10}
\end{eqnarray}
by noting that
\begin{eqnarray}\label{ad11}
\sum_{i\neq 1}\frac{|\langle 1|\hat{H}_2|i\rangle|^2}{\lambda_i}=\sum_{i\neq 1}\frac{|\langle 1|H_1+\lambda\Delta H-\omega^{'}_1|i\rangle|^2}{\lambda_i}=\sum_{i\neq 1}\frac{\lambda^2|\langle 1|\Delta H|i\rangle|^2}{\lambda_i}.
\end{eqnarray}
Moreover, by the definition of the notation $O(\cdot)$ in Section~\ref{secdef} we can write $\sum_{i\neq 1}\frac{4\lambda^2\sin^2(\lambda_iT/2)|\langle 1|\Delta H|i\rangle|^2}{T^2\lambda_i^3}=O(\frac{\lambda^2}{T^2\lambda_2^3})$.

Denote $\bar{H}=\max_{f(t)\in(0,1)}||H(t)||$. Since
\begin{eqnarray}\label{ad12}
||\int_0^Tdt\tr\{-\mbox{i}e^{\mbox{i}H_1t}[\hat{H}_2,(1-f(t))\hat{H}_1]e^{-\mbox{i}H_1t}R(t)\}||\leq\frac{T\lambda}{2}\bar{H}^2||R(t)||
\end{eqnarray}
is $O(\lambda^3)$, we can further write (\ref{ad10}) as
\begin{equation}\label{twoerror}
\langle H_2\rangle_{\rho_T}-\langle H_2\rangle_{\rho_g}=-\sum_{i\neq 1}\frac{\lambda^2|\langle 1|\Delta H|i\rangle|^2}{\lambda_i}+\sum_{i\neq 1}\frac{4\lambda^2\sin^2(\lambda_iT/2)|\langle 1|\Delta H|i\rangle|^2}{T^2\lambda_i^3}+O(\lambda^3).
\end{equation}
Next we will calculate $\omega_1^{'}-\langle 1|H_2|1\rangle$. We have
\begin{equation}\label{ad13}
\langle 1|H_2|1\rangle=\langle 1|H_1+\lambda\Delta H|1\rangle=\omega_1+\lambda\langle 1|\Delta H|1\rangle.
\end{equation}
The smallest eigenvalue $\omega_1^{'}$ of $H_2$ can be calculated using the first-order time-independent perturbation theory for non-degenerate system. Assume $H_1$ is the unperturbed Hamiltonian and the perturbation is $\lambda\Delta H$, then the lowest eigenvalue of the perturbed Hamiltonian $H_1+\lambda\Delta H$ can be written as series in terms of $\lambda$ and $\omega_1$ \cite{Griffiths95}:
\begin{equation}\label{ad14}
\omega_1^{'}=\omega_1+\lambda\langle 1|\Delta H|1\rangle-\lambda^2\sum_{i\neq 1}\frac{|\langle 1|\Delta H|i\rangle|^2}{\lambda_i}+O(\lambda^3).
\end{equation}
Thus we conclude
\begin{equation}\label{ad15}
\omega_1^{'}-\langle H_2\rangle_{\rho_g}=\omega_1^{'}-\langle 1|H_2|1\rangle=-\lambda^2\sum_{i\neq1}\frac{|\langle1|\Delta H|i\rangle|^2}{\lambda_i}+O(\lambda^3).
\end{equation}
Comparing (\ref{twoerror}) and (\ref{ad15}), the terms $-\lambda^2\sum_{i\neq1}\frac{|\langle1|\Delta H|i\rangle|^2}{\lambda_i}$ cancel and so the error $\epsilon$ is estimated by
\begin{eqnarray}\label{prop1error}
\epsilon&=&\sum_{i\neq 1}\frac{4\lambda^2\sin^2(\lambda_iT/2)|\langle 1|\Delta H|i\rangle|^2}{T^2\lambda_i^3}+O(\lambda^3)\nonumber\\
&=&O(\frac{\lambda^2}{T^2\lambda_2^3}),\quad \lambda\rightarrow0.
\end{eqnarray}
\end{proof}

\section{Error Estimation for Arbitrary Interpolations}\label{secnonlinear}
The approach derived in the previous section can be easily generalized for
arbitrary given continuous interpolating functions. The generalization can
simply be done by replacing the linear interpolation function with the
given continuous function $f(t)$ and then recalculating the double
integration
\begin{equation}
A_i(T)=-2\int_0^Tdt\int_0^tdt^{'}(1-\frac{t}{T})\lambda_if(t^{'})\cos(\lambda_i(t^{'}-t))
\end{equation}
in (\ref{ft}). The error estimation can easily be obtained from the proof of Proposition \ref{theorem1}:
\begin{prop}\label{theorem2}
For an arbitrarily given $f(t)$, the error estimation is given by
\begin{equation}\label{thm2e}
\epsilon=\lambda^2\sum_{i\neq1}A_i(T)|\langle1|\Delta H|i\rangle|^2+\lambda^2\sum_{i\neq1}\frac{|\langle1|\Delta H|i\rangle|^2}{\lambda_i}+O(\lambda^3)
\end{equation}
as $\lambda\rightarrow0$.
\end{prop}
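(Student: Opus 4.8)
The plan is to rerun the proof of Proposition~\ref{theorem1} almost verbatim, since the only step in that argument where the linear form $f(t)=t/T$ was actually used is the closed-form evaluation of one double time-integral.

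First I would note that the identity (\ref{ft}) for $\langle\hat H_2\rangle_{\rho_T}-\langle\hat H_2\rangle_{\rho_g}$ is already valid for a general continuous interpolating function; linearity of $f$ enters only in passing from (\ref{ft}) to (\ref{ad9}), where the inner double time-integral was computed explicitly. For an arbitrary $f$ I would simply leave that double integral abstract --- it is precisely the quantity $A_i(T)$ defined before the proposition --- so that (\ref{ft}) reads
\[
\langle\hat H_2\rangle_{\rho_T}-\langle\hat H_2\rangle_{\rho_g}=\sum_{i\neq1}A_i(T)\,\langle 1|\hat H_2|i\rangle\langle i|\hat H_2|1\rangle+\int_0^Tdt\,\tr\{-\mathrm{i}e^{\mathrm{i}H_1t}[\hat H_2,(1-f(t))\hat H_1]e^{-\mathrm{i}H_1t}R(t)\}.
\]
Using $\langle 1|\hat H_2|i\rangle=\lambda\langle 1|\Delta H|i\rangle$ for $i\neq1$ (equation (\ref{ad11})) together with the Hermiticity of $\hat H_2$, the first sum collapses to $\lambda^2\sum_{i\neq1}A_i(T)|\langle 1|\Delta H|i\rangle|^2$.

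Second I would dispose of the remainder term exactly as in (\ref{ad12}): its norm is bounded by $\frac{T\lambda}{2}\bar H^2\|R(t)\|$, and since $\|R(t)\|=O(\lambda^2)$ it is $O(\lambda^3)$; this step is insensitive to $f$ beyond requiring $f$ to be bounded on $[0,T]$, which holds by continuity. Combined with (\ref{int1}), which lets me replace $\hat H_2$ by $H_2$ on the left, this gives $\langle H_2\rangle_{\rho_T}-\langle H_2\rangle_{\rho_g}=\lambda^2\sum_{i\neq1}A_i(T)|\langle 1|\Delta H|i\rangle|^2+O(\lambda^3)$. The computation of $\omega_1^{'}-\langle H_2\rangle_{\rho_g}$ in (\ref{ad13})--(\ref{ad15}) uses only first-order time-independent perturbation theory and does not involve $f$, so it still gives $\omega_1^{'}-\langle H_2\rangle_{\rho_g}=-\lambda^2\sum_{i\neq1}\frac{|\langle1|\Delta H|i\rangle|^2}{\lambda_i}+O(\lambda^3)$. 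Subtracting these through $\epsilon=(\langle H_2\rangle_{\rho_T}-\langle H_2\rangle_{\rho_g})-(\omega_1^{'}-\langle H_2\rangle_{\rho_g})$, which is (\ref{int3}), produces (\ref{thm2e}).

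The one conceptual difference from Proposition~\ref{theorem1} is that in the linear case the $-1/\lambda_i$ piece of $A_i(T)$ was exhibited explicitly and cancelled against the perturbation term, leaving only the $\sin^2$ contribution; for general $f$ this cancellation is no longer automatic, which is precisely why (\ref{thm2e}) carries the additional summand $\lambda^2\sum_{i\neq1}|\langle1|\Delta H|i\rangle|^2/\lambda_i$ absent from Proposition~\ref{theorem1}. I do not anticipate a genuine obstacle --- the work is essentially bookkeeping --- but the point requiring care is the uniformity of the $O(\lambda^3)$ estimates: one must check that the Magnus remainder $R(t)$, the bound $\bar H$, and the tail of the perturbation series for $\omega_1^{'}$ are all controlled on $[0,T]$ for the given $f$, so that they remain negligible as $\lambda\to0$. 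No estimate beyond those already established for Proposition~\ref{theorem1} is needed.
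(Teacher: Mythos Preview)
Your proposal is correct and follows essentially the same approach as the paper, which also simply reruns the Proposition~\ref{theorem1} argument with the double time-integral left as $A_i(T)$ and the perturbation-theory computation of $\omega_1'-\langle H_2\rangle_{\rho_g}$ unchanged. Your writeup is in fact more detailed than the paper's own proof, which merely records the two intermediate identities $\langle H_2\rangle_{\rho_T}-\langle H_2\rangle_{\rho_g}=\lambda^2\sum_{i\neq1}A_i(T)|\langle1|\Delta H|i\rangle|^2+O(\lambda^3)$ and $\omega_1'-\langle H_2\rangle_{\rho_g}=-\lambda^2\sum_{i\neq1}|\langle1|\Delta H|i\rangle|^2/\lambda_i+O(\lambda^3)$ and subtracts.
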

\begin{proof}
$\epsilon$ is still calculated by (\ref{int3}), using $\langle H_2\rangle_{\rho_T}-\langle H_2\rangle_{\rho_g}$ and $\omega_1^{'}-\langle H_2\rangle_{\rho_g}$. We have
\begin{equation}
\langle H_2\rangle_{\rho_T}-\langle H_2\rangle_{\rho_g}=\sum_{i\neq1}A_i(T)\lambda^2|\langle1|\Delta H|i\rangle|^2+O(\lambda^3)
\end{equation}
and
\begin{equation}
\omega_1^{'}-\langle H_2\rangle_{\rho_g}=-\lambda^2\sum_{i\neq1}\frac{|\langle1|\Delta H|i\rangle|^2}{\lambda_i}+O(\lambda^3).
\end{equation}
\end{proof}
It must be pointed out that $A(T)$ is very easy to calculate with the aid of any softwares that can perform symbolic integration, and therefore it is straightforward to apply Proposition \ref{theorem2} to find the error estimation for a given interpolating function, as we are going to do in the next section.

\section{Examples}\label{secexam}

\subsection{Linear Interpolation:$\ f(t)=t/T$}
By Proposition \ref{theorem1}, the error estimation is $\epsilon=\sum_{i\neq 1}\frac{4\sin^2(\lambda_iT/2)}{T^2\lambda_i^3}|\langle 1|\Delta H|i\rangle|^2\lambda^2+O(\lambda^3)$ as $\lambda\rightarrow0$. Since $\sin^2(\lambda_iT/2)$ and $\Delta H$ are bounded, this error term is primarily determined by $\frac{\lambda}{T}$ which is the average speed of the variation of the system Hamiltonian, and $\frac{1}{\lambda_i}$ which is the inverse of the energy gap between the ground and $i$-th eigenstates of $H_1$, as $\lambda\rightarrow0$. In particular, we have
\begin{equation}
\lim_{{\lambda}\rightarrow0}\frac{\epsilon}{(\frac{\lambda}{T})^2}=\sum_{i\neq 1}\frac{4\sin^2(\lambda_iT/2)}{\lambda_i^3}|\langle 1|\Delta H|i\rangle|^2.
\label{VU.lininterp}
\end{equation}
Therefore, when the inverse of the energy gaps $\frac{1}{\lambda_i}$ are fixed values, the approximation error $\epsilon$ is estimated to be proportional to the square of the average speed of the variation of the Hamiltonian, which is $(\frac{\lambda}{T})^2$, as $\lambda\rightarrow0$.

\subsection{Quadratic Interpolation:$\ f(t)=t^2/T^2$}
Replace $f(t)$ with a nonlinear function $f(t)=\frac{t^2}{T^2}$ in
(\ref{ft}) and we recalculate the integral to be
\begin{eqnarray}\label{nonlinear1}
\sum_{i\neq1}A_i(T)=\sum_{i\neq 1}(-\frac{1}{\lambda_i}+\frac{16\sin^2(\frac{T\lambda_i}{2})+4T^2\lambda_i^2-8T\lambda_i\sin(T\lambda_i)}{T^4\lambda_i^5})|\langle1|\Delta H|i\rangle|^2.\nonumber\\
\end{eqnarray}
By Proposition \ref{theorem2}, for sufficiently small $\lambda$, the error is estimated to be of order of $\lambda^2$:
\begin{eqnarray}
\epsilon_{quad}&=&\lambda^2\sum_{i\neq 1}\frac{16\sin^2(\frac{T\lambda_i}{2})+4T^2\lambda_i^2-8T\lambda_i\sin(T\lambda_i)}{T^4\lambda_i^5}|\langle1|\Delta H|i\rangle|^2+O(\lambda^3)\nonumber\\
&=&(\frac{\lambda}{T})^2\sum_{i\neq
  1}[\frac{16\sin^2(\frac{T\lambda_i}{2})}{T^2\lambda_i^5}+\frac{4}{\lambda_i^3}-\frac{8\sin(T\lambda_i)}{T\lambda_i^4}]|\langle1|\Delta H|i\rangle|^2+O(\lambda^3).\nonumber\\
\label{VO.quadinterp}
\end{eqnarray}
That is, in contrast to the linear interpolation case, we have
\begin{equation}
\lim_{\lambda\rightarrow0}\frac{\epsilon_{quad}}{(\frac{\lambda}{T})^2}=\sum_{i\neq
  1}[\frac{16\sin^2(\frac{T\lambda_i}{2})}{T^2\lambda_i^5}+\frac{4}{\lambda_i^3}-\frac{8\sin(T\lambda_i)}{T\lambda_i^4}]|\langle1|\Delta H|i\rangle|^2.
\end{equation}
This calculation shows that if the evolution speed is infinitely slow, then
the system dynamics is adiabatic during $t\in[0,T]$. However, the scaling
of $\epsilon_{quad}$ with respect of the square of the average evolution speed $\frac{\lambda}{T}$ is not as simple
as in the linear case, where the scaling of $\epsilon$ with respect of $(\frac{\lambda}{T})^2$ is primarily determined by the inverse of the energy gaps as $\lambda\rightarrow0$.
In the quadratic case, this scaling is primarily determined by a complex factor
$[\frac{16\sin^2(\frac{T\lambda_i}{2})}{T^2\lambda_i^5}+\frac{4}{\lambda_i^3}-\frac{8\sin(T\lambda_i)}{T\lambda_i^4}]$
which depends mainly on the inverse of the energy gaps $\{\lambda_i\}$ and the inverse of the evolution time $T$.

\subsection{Interpolation with Decaying Resonant Terms}
Here we assume a linear interpolating function with an additional
oscillating term that gradually decays to zero. That is,
\[
f(t)=\frac{t}{T}+g(1-\frac{t}{T})\sin(\lambda_ct),
\]
where $\lambda_c$ is the oscillating frequency of the
perturbation. Ortigoso observed in \cite{Ortigoso12} the inconsistency
in the applicability of the adiabatic theorem when the Hamiltonian contains
resonant terms whose amplitudes go asymptotically to zero.

Replace $f(t)$ with $f(t)=\frac{t}{T}+g(1-\frac{t}{T})\sin(\lambda_ct)$ in
(\ref{ft}) and we recalculate the integral to be
\begin{equation}\label{nonlinear4}
\sum_{i\neq1}A_i(T)=\sum_{i\neq 1}\frac{Q_1(g,T,\lambda_i,\lambda_c)}{T^2(2\lambda_i^{11}-8\lambda_i^9\lambda_c^2+12\lambda_i^7\lambda_c^4-8\lambda_i^5\lambda_c^6+2\lambda_i^3\lambda_c^8)}.
\end{equation}
$Q_1$ is a function of four parameters. In particular, we note that each term in (\ref{nonlinear4}) is well defined for all $\lambda_c$, including $\lambda_c=\lambda_i$, since as $\lambda_c\to\lambda_i$, the $i$-th term in (\ref{nonlinear4}) approaches
\begin{eqnarray}\label{nonlinear5}
&&[-128\sin^2(\frac{T\lambda_i}{2})-16g\sin(T\lambda_i)+8g\sin(2T\lambda_i)+g^2\nonumber\\
&+&g^2(2\sin^2(T\lambda_i)-1)-32T^2\lambda^2-16gT\lambda_i+2g^2T^4\lambda_i^4\nonumber\\
&+&16gT^2\lambda_i^2\sin(T\lambda_i)-16gT\lambda_i(2\sin^2(\frac{T\lambda_i}{2})-1)\nonumber\\
&-&2g^2T^2\lambda_i^2(2\sin^2(T\lambda_i)-1)-2g^2T\lambda_i\sin(2T\lambda_i)]/32T^2\lambda_i^3\nonumber\\
&=&-\frac{1}{\lambda_i}+\frac{g^2}{16}T^2\lambda_i+\frac{g}{2\lambda_i}\sin(T\lambda_i)-\frac{g^2}{16\lambda_i}(2\sin^2(T\lambda_i)-1)+Q(T),
\end{eqnarray}
where $Q(T)$ is a complicated fraction with $T$ being in its denominator. The error resulting
from the $i$-th term is given by
\begin{eqnarray}\label{counte}
&&\epsilon_i\nonumber\\
&=&|\langle1|\Delta H|i\rangle|^2[\frac{g^2T^4\lambda_i}{16}+\frac{gT^2\sin(T\lambda_i)}{2\lambda_i}-\frac{g^2T^2(2\sin^2(T\lambda_i)-1)}{16\lambda_i}+T^2Q(T)](\frac{\lambda}{T})^2\nonumber\\
&+&O(\lambda^3)
\end{eqnarray}
as $\lambda\rightarrow0$. We have
\begin{equation}
\lim_{\lambda\rightarrow0}\frac{\epsilon_i}{(\frac{\lambda}{T})^2}=|\langle1|\Delta H|i\rangle|^2[\frac{g^2T^4\lambda_i}{16}+\frac{gT^2\sin(T\lambda_i)}{2\lambda_i}-\frac{g^2T^2(2\sin^2(T\lambda_i)-1)}{16\lambda_i}+T^2Q(T)].
\label{VO.oscinterp}
\end{equation}
The scaling of $\epsilon_i$ with respect of $(\frac{\lambda}{T})^2$ is additionally determined by $T^2$ and $T^4$, as compared to the quadratic case. This is
where adiabatic approximation error may not be small if the average evolution speed is slow. In particular by (\ref{VO.oscinterp}), if one chooses a comparably large value for $T$ in an adiabatic evolution experiment,
the adiabatic approximation error may not decrease as expected when one applies a slow evolution speed $\frac{\lambda}{T}$.

In order to further illustrate this point, we can heuristically compare the speed of convergence of $\epsilon$ to zero observed in this case and the quadratic case, as the speed of the adiabatic process ($\lambda/T$) reduces and the evolution horizon $T$ increases. The difference in the speed of convergence can be clearly seen using the ratio
\begin{equation}\label{ratio}
\lim_{T\to\infty}\left(\lim_{(\lambda/T)\to 0}\frac{\epsilon_i}{\epsilon_{quad}}\right)=\infty.
\end{equation}
Therefore, the rate of convergence considered in this subsection is slower than that in the quadratic
or linear case. i.e., $\epsilon$ goes to zero as $\lambda\to 0$ at a much slower
rate than in the linear interpolation case or the quadratic interpolation case
if $T$ is large. Furthermore, the larger $T$ is, the slower
the convergence.

\section{Conclusion}\label{seccon}
In this paper we provide a rigorous analysis of the time-dependent
evolution of Hamiltonian-varying quantum systems. As we calculated, the adiabatic approximation error
is not proportional to the average speed of the variation of the system Hamiltonian and the inverse
of the energy gaps in many cases. The results in this paper may provide guidelines when applying complicated interpolation for adiabatic evolution.

\appendix
\section{}\label{appendix1}
The Magnus expansion is proposed to solve the following time-dependent equation \cite{Blanes09}
\begin{equation}
\frac{dY(t)}{dt}=A(t)Y(t).
\end{equation}
The solution of the above equation can be written as
\begin{equation}
Y(t)=\exp(\sum_{k=1}^{\infty}\Omega_k(t))Y(0),
\end{equation}
where the first three terms in the Magnus series $\{\Omega_k,k=1,2,\cdot\cdot\cdot,\infty\}$ are calculated by
\begin{eqnarray}
\Omega_1(t)&=&\int_0^tA(t_1)dt_1,\nonumber\\
\Omega_2(t)&=&\frac{1}{2}\int_0^tdt_1\int_0^{t_1}dt_2[A(t_1),A(t_2)],\nonumber\\
\Omega_3(t)&=&\frac{1}{6}\int_0^tdt_1\int_0^{t_1}dt_2\int_0^{t_2}dt_3([A(t_1),[A(t_2),A(t_3)]]+[A(t_3),[A(t_2),A(t_1)]]).\nonumber\\
\end{eqnarray}
The rest terms in the Magnus series can also be written as the integrals of nested commutators.

The dynamical equation of the quantum state in interaction picture is given by
\begin{equation}\label{ad2}
\mbox{i}\frac{\partial|\psi_I(t)\rangle}{\partial t}=e^{\mbox{i}H_1t}\lambda f(t)\Delta He^{-\mbox{i}H_1t}|\psi_I(t)\rangle,
\end{equation}
where $|\psi(t)\rangle=e^{-\mbox{i}H_1t}|\psi_I(t)\rangle$. Applying the Magnus expansion to (\ref{ad2}) yields
\begin{eqnarray}
|\psi_I(t)\rangle=(1-\mbox{i}\lambda\int_0^tdt^{'}e^{\mbox{i}H_1t^{'}}f(t^{'})\Delta He^{-\mbox{i}H_1t^{'}}+R_0(t))|\psi(0)\rangle,\label{ad3}
\end{eqnarray}
where $R_0(t)$ includes all the higher-order terms as determined by $\{\Omega_k\}$. Obviously, $||R_0(t)||$ is of the order $O(\lambda^2)$. Transforming back to Schr{\"o}dinger picture we obtain the expression for the density operator as
\begin{eqnarray}\label{ad5}
\rho_t&=&e^{-\mbox{i}H_1t}(1-\mbox{i}\lambda\int_0^tdt^{'}e^{\mbox{i}H_1t^{'}}f(t^{'})\Delta He^{-\mbox{i}H_1t^{'}}\nonumber\\
&+&R_0(t))\rho_g(1+\mbox{i}\lambda\int_0^tdt^{'}e^{\mbox{i}H_1t^{'}}f(t^{'})\Delta He^{-\mbox{i}H_1t^{'}}+R_0^\dagger(t))e^{\mbox{i}H_1t}\nonumber\\
&=&e^{-\mbox{i}H_1t}(\rho_g+R(t)+\mbox{i}[\rho_g,\lambda\int_0^tdt^{'}e^{\mbox{i}H_1t^{'}}f(t^{'})\Delta He^{-\mbox{i}H_1t^{'}}])e^{\mbox{i}H_1t}.
\end{eqnarray}
Obviously, $||R(t)||$ is also of the order $O(\lambda^2)$.

An explicit condition for the Magnus series to converge is given by \cite{Blanes09}
\begin{equation}\label{convergecon2}
||\Delta H||\int_0^T\lambda f(t)dt<\pi.
\end{equation}

\section{}\label{appendix2}
The derivative of $\langle \hat{H}_2\rangle_{\rho_t}$ is calculated as
\begin{eqnarray}\label{ad7}
&&\frac{d}{dt}\langle \hat{H}_2\rangle_{\rho_t}\nonumber\\
&=&\langle-\mbox{i}[\hat{H}_2,H_1+f(t)(H_2-H_1)]\rangle_{\rho_t}=\langle-\mbox{i}[\hat{H}_2,\hat{H}_1+f(t)(\hat{H}_2-\hat{H}_1)]\rangle_{\rho_t}\nonumber\\
&=&\langle-\mbox{i}e^{\mbox{i}H_1t}[\hat{H}_2,(1-f(t))\hat{H}_1]e^{-\mbox{i}H_1t}\rangle_{\rho_g+\mbox{i}[\rho_g,\lambda\int_0^tdt^{'}e^{\mbox{i}H_1t^{'}}f(t^{'})(\Delta H)e^{-\mbox{i}H_1t^{'}}]+R(t)}\nonumber\\
&=&-\langle[e^{\mbox{i}H_1t}[\hat{H}_2,(1-f(t))\hat{H}_1]e^{-\mbox{i}H_1t},\lambda\int_0^tdt^{'}e^{\mbox{i}H_1t^{'}}f(t^{'})\Delta He^{-\mbox{i}H_1t^{'}}]\rangle_{\rho_g}\nonumber\\
&+&\tr\{-\mbox{i}e^{\mbox{i}H_1t}[\hat{H}_2,(1-f(t))\hat{H}_1]e^{-\mbox{i}H_1t}R(t)\}\nonumber\\
&=&-\langle[e^{\mbox{i}H_1t}[\hat{H}_2,(1-f(t))\hat{H}_1]e^{-\mbox{i}H_1t},\int_0^tdt^{'}e^{\mbox{i}H_1t^{'}}f(t^{'})\hat{H}_2e^{-\mbox{i}H_1t^{'}}]\rangle_{\rho_g}\nonumber\\
&+&\tr\{-\mbox{i}e^{\mbox{i}H_1t}[\hat{H}_2,(1-f(t))\hat{H}_1]e^{-\mbox{i}H_1t}R(t)\},
\end{eqnarray}
where we made use of the relation (\ref{ini}) and $\lambda\Delta H=\hat{H}_2-\hat{H}_1+\omega^{'}_1-\omega_1$. Calculating (\ref{ad7}) further leads to
\begin{eqnarray}\label{ad8}
&&\frac{d}{dt}\langle \hat{H}_2\rangle_{\rho_t}=-(1-f(t))\langle e^{\mbox{i}\omega_1t}\hat{H}_2\hat{H}_1e^{-\mbox{i}H_1t}\int_0^tdt^{'}e^{\mbox{i}H_1t^{'}}f(t^{'})\hat{H}_2e^{-\mbox{i}\omega_1t^{'}}\nonumber\\
&+&\int_0^tdt^{'}e^{\mbox{i}\omega_1t^{'}}f(t^{'})\hat{H}_2e^{-\mbox{i}H_1t^{'}}e^{\mbox{i}H_1t}\hat{H}_1\hat{H}_2e^{-\mbox{i}\omega_1t}\rangle_{\rho_g}\nonumber\\
&+&\tr\{-\mbox{i}e^{\mbox{i}H_1t}[\hat{H}_2,(1-f(t))\hat{H}_1]e^{-\mbox{i}H_1t}R(t)\}\nonumber\\
&=&-(1-f(t))\langle e^{\mbox{i}\omega_1t}\hat{H}_2\hat{H}_1\sum|i\rangle\langle i|e^{-\mbox{i}H_1t}\int_0^tdt^{'}e^{\mbox{i}H_1t^{'}}f(t^{'})\hat{H}_2e^{-\mbox{i}\omega_1t^{'}}\nonumber\\
&+&\int_0^tdt^{'}e^{\mbox{i}\omega_1t^{'}}f(t^{'})\hat{H}_2e^{-\mbox{i}H_1t^{'}}e^{\mbox{i}H_1t}\sum|i\rangle\langle i|\hat{H}_1\hat{H}_2e^{-\mbox{i}\omega_1t}\rangle_{\rho_g}\nonumber\\
&+&\tr\{-ie^{\mbox{i}H_1t}[\hat{H}_2,(1-f(t))\hat{H}_1]e^{-\mbox{i}H_1t}R(t)\}\nonumber\\
&=&-(1-f(t))\sum_{i\neq 1}(\omega_i-\omega_1)\langle \hat{H}_2|i\rangle\langle i|\hat{H}_2\rangle_{\rho_g}\int_0^tdt^{'}f(t^{'})(e^{\mbox{i}(\omega_i-\omega_1)(t^{'}-t)}+e^{\mbox{i}(\omega_i-\omega_1)(t-t^{'})})\nonumber\\
&+&\tr\{-\mbox{i}e^{\mbox{i}H_1t}[\hat{H}_2,(1-f(t))\hat{H}_1]e^{-\mbox{i}H_1t}R(t)\}\nonumber\\
&=&-2(1-f(t))\sum_{i\neq 1}(\omega_i-\omega_1)\langle 1|\hat{H}_2|i\rangle\langle i|\hat{H}_2|1\rangle\int_0^tdt^{'}f(t^{'})\cos((\omega_i-\omega_1)(t^{'}-t))\nonumber\\
&+&\tr\{-\mbox{i}e^{\mbox{i}H_1t}[\hat{H}_2,(1-f(t))\hat{H}_1]e^{-\mbox{i}H_1t}R(t)\}.
\end{eqnarray}
With the linear interpolating function $f(t)=\frac{t}{T}$, the direct integration of (\ref{ad8}) over $[0,T]$ gives (\ref{ad9}).

\section{}\label{appendix3}
The state of the system will remain a pure state during the evolution. Therefore, we can express the final state as $\rho_T=|\psi\rangle\langle\psi|$ with $|\psi\rangle=\sum_{i=1}^Nc_i|i^{'}\rangle$. Using this expression, the error measure $\epsilon$ defined by (\ref{nde}) can be written as
\begin{equation}
\epsilon=\langle \hat{H}_2\rangle_{\rho_T}=\langle\psi|\hat{H}_2|\psi\rangle=\sum_{i=2}^N|c_i|^2\lambda_i^{'}\geq\sum_{i=2}^N|c_i|^2\lambda_2^{'}.
\end{equation}
The fidelity of the final state and the ground state $|1^{'}\rangle$ is calculated by
\begin{equation}
F(|\psi\rangle,|1^{'}\rangle)=\sqrt{|\langle\psi|1^{'}\rangle|^2}=\sqrt{|c_1|^2}=\sqrt{1-\sum_{i=2}^N|c_i|^2}\geq\sqrt{1-\frac{\epsilon}{\lambda_2^{'}}}.
\end{equation}


\begin{backmatter}

\section*{Competing interests}
  The authors declare that they have no competing interests.

\section*{Acknowledgements}
Yu Pan would like to thank Li Li and Charles Hill for their valuable suggestions. We gratefully acknowledge support by the Australian Research Council Centre of Excellence for Quantum Computation
and Communication Technology (project number CE110001027), Australian
Research Council Discovery Project (projects DP110102322 and DP140101779).

\bibliographystyle{bmc-mathphys} 







\end{backmatter}
\end{document}